\newtheorem{definition}{Definition}
\newtheorem{theorem}{Theorem}
\newtheorem{lemma}[theorem]{Lemma}
\newtheorem{observation}{Observation}
\newcommand{\OO}{\ensuremath{{O}}}
\newcommand{\opt}{\mbox{\textsc{Opt}}}
\newcommand{\qq}{\ensuremath{q}} 
\newcommand{\VV}{V}
\newcommand{\qed}{\hfill$\square$}
\newenvironment{proof}{\noindent {\bf Proof $\;$}}
                      {}
\begin{document}

\title{The Sorting Buffer Problem is {NP}-hard}

\author{
  Ho-Leung Chan\thanks{
    The University of Hong Kong, Hong Kong. Email: {\tt hlchan@cs.hku.hk}.
}
  \and
  Nicole Megow\thanks{
Max-Planck-Institut f\"ur Informatik, Saarbr{\"u}cken,
    Germany.
    Email: {\tt \{nmegow,vanstee\} @mpi-inf.mpg.de}.
}
  \and
  Rob van Stee\footnotemark[2]
  \and
  Ren{\'e} Sitters\thanks{
    Vrije Universiteit Amsterdam, The Netherlands.
    Email: {\tt rsitters@feweb.vu.nl}.
  }
}

\date{\today}

\maketitle

\begin{abstract}
  We consider the offline sorting buffer problem. The input is a
  sequence of items of different types. All items must be processed
  one by one by a server. The server is equipped with a random-access
  buffer of limited capacity which can be used to rearrange items. The
  problem is to design a scheduling strategy that decides upon the
  order in which items from the buffer are sent to the server. Each
  type change incurs unit cost, and thus, the cost minimizing
  objective is to minimize the total number of type changes for
  serving the entire sequence.  This problem is motivated by various
  applications in manufacturing processes and computer science, and it
  has attracted significant attention in the last few years. The main
  focus has been on online competitive algorithms. Surprisingly little
  is known on the basic offline problem.

  \hspace*{3ex} In this paper, we show that the sorting buffer problem with
  uniform cost is NP-hard and, thus, close one of the most fundamental
  questions for the offline problem.  On the positive side, we give
  an~$O(1)$-approximation algorithm when the scheduler is given a
  buffer only slightly larger than double the original size.
  We also give a dynamic programming algorithm for the special case of
  buffer size two that solves the problem exactly in linear time,
  improving on the standard DP which runs in cubic time.
\end{abstract}


\section{Introduction}

The {\em sorting buffer} problem results from the following scenario. The input is a sequence~$\sigma$ of~$n$ items of different
types. W.l.o.g., we represent different types by different colors, 
i.e., each item~$i$ is associated with a
color~$c(i)$. The total number of distinct colors 
in the sequence is denoted by~$C$. All items must be processed by a
server. The
server is equipped with a random-access buffer of limited capacity 
which can be used to rearrange the items.  The items are moved
one after another into the buffer that can hold at most~$k$ items.
At any step, a scheduling algorithm chooses a color,
say~$c$, and then all items in the buffer associated with color~$c$ are removed from the buffer and processed by the server.
This creates space in the buffer, and the next items in the sequence will be moved to the buffer. If some of these new items
have color~$c$, they will be  removed and processed immediately and it continues until no item in the buffer is associated
with color~$c$. The scheduling algorithm then chooses a new color and
repeats, until all items in the sequence are removed for processing.
The goal is to design a scheduling algorithm that minimizes the total number of color
changes. The buffer has no color initially.

While the sorting buffer problem looks simple, it models a number of
important problems in manufacturing processes, hardware design,
computer graphics, file servers and information retrieval.  For
example, consider the sequencing problem in an automotive paint
shop~\cite{gutenschwagerSV04}, where cars are painted in different
colors. The cars traverse this production stage consecutively, and
whenever a color change is necessary, this causes setup and cleaning
cost. The goal is to minimize the total cost for changing colors.  For
an extended discussion on various applications and more references, we
refer the readers to, e.g., \cite{RSW02,ERW07,rabaniA10}.

\subsection{Previous work}

The sorting buffer problem~(also known as {\em buffer reordering
  problem}) has attracted significant attention since it was first
proposed by R\"acke, Sohler, and Westermann~\cite{RSW02}.  The
original focus was on competitive analysis of online algorithms.
R\"acke et al.~\cite{RSW02} proposed an~$O( \log^2 k)$-competitive
algorithm and showed that some simple heuristics like First In First
Out~(FIFO) and Least Recently Used~(LRU)
are~$\Omega(\sqrt{k})$-competitive, where $k > 0$ is the buffer size.
Englert and
Westermann~\cite{EW05} improved these results and gave an~$O(\log
k)$-competitive algorithm for a more general non-uniform cost
function, where the cost of a color change depends on the final color.
To obtain this result, they first relate their algorithm's solution to
an optimal offline solution using a buffer of size~$k/4$.  Then, they
prove that an offline optimum with buffer size~$k/4$ is~$O(
\log{k})$-competitive against an offline optimum with buffer size~$k$.
The first result translates into a constant competitiveness result
using resource augmentation, i.e., their algorithm is~$4$-competitive
when given a buffer with size~$4$ times the original size. The currently
best known result was derived very recently by Avigdor-Elgrabli and
Rabani~\cite{rabaniA10}; they gave
an~$O(\log{k}/\log\log{k})$-competitive deterministic online
algorithm for the sorting buffer problem with non-uniform costs.

Considerable work has been done for the problem when the cost
function is a metric and the cost of a color change depends on both
the original and final colors. 
We do not review the results here and refer the readers to the summary
in~\cite{rabaniA10}.

In order to develop good online methods, one of the most natural steps is to investigate the offline sorting buffer problem and
identify its structural properties. Even if the offline problem is less relevant in practice, its analysis should be easier and
give new insight to the problem. However, only little is known on the offline problem. It is easy to see that there are dynamic
programming algorithms that solve the problem optimally in~$O(n^{k+1})$ or~$O(n^{C+1})$ time; see also~\cite{KP04,KhandekarP10}.
Aiming at polynomial time algorithms, the above mentioned online algorithms already
provide the best known offline approximation guarantees (which are non-constant). 
A constant approximate algorithm for the offline case on the line metric has been derived by Khandekar and Pandit~\cite{KhandekarP10}; however, it runs in quasi-polynomial time.

There has been research on the complementary variant of the sorting buffer problem,
where the objective is to maximize the number of avoided color changes
in the input stream. This problem is more successful in terms of
approximation algorithms. Kohrt and Pruhs~\cite{KP04} gave a
polynomial time~$20$-approximate algorithm. This was later improved by
Bar-Yehuda and Laserson~\cite{BL07} who gave a~$9$-approximation
algorithm for non-uniform cost.  Note that the maximization and
minimization problems have the same optimal solution, but they may be
very different in terms of approximation.

\subsection{Our results}

We give a concise NP-hardness proof for the sorting buffer
problem by a reduction from 3-Partition~\cite{gareyJ79:book}, and
hence close one of the fundamental open questions on this problem~\cite{rabaniA10}. 
Clearly, this implies that both variants, the minimization
and the maximization problem, are NP-hard. 
Independently, an NP-hardness proof was given by Asahiro et al.\cite{AsahiroKM2008}.
However, their proof is much longer than ours and turned out to be incorrect. 
Recently, they gave a new, though still very long, proof~\cite{AsahiroKM2010}.

We also note, that increasing the number of servers does not make the
problem easier. The idea of modeling more servers leads to an
intuitive generalization of~(or joint model for) the sorting buffer
problem and the somewhat related well-known paging problem. In the
latter problem, there is given a cache of~$m$ colors while a request
from an online request sequence must be served immediately without
intermediate buffering. We could interpret the cache as~$m$ servers
that may immediately serve a current request.  This leads to a {\em
  generalized sorting buffer problem} in which we have a buffer of
size~$k$ and~$m$ servers. In this general formulation, the sorting
buffer problem corresponds to the special case with~$m=1$, while the
paging problem has~$k=1$.  Yet, the earlier problem is NP-hard, as we
show in this paper, while the later problem is polynomial-time
solvable~\cite{belady66}.

Naturally, we also consider an immediate adaption of the optimal
paging algorithm~\cite{belady66} Longest Forward Distance~(LFD) as a
candidate for the sorting buffer problem.  However, we show that it
is~$\Omega( k^{1/3} )$-approximate, hence a different strategy is
needed to derive constant approximate algorithms for sorting buffers.
This negative result is in line with similar observations for several
other natural~(online) strategies in R\"acke et~al.~\cite{RSW02}.

On the positive side, we give an~$O(n\log{C})$-time optimal algorithm
for the special case in which the size of the buffer is~$k=2$. The
algorithm uses a somewhat special dynamic programming approach with a
non-trivial combination of data structures that guarantee the linear
running time in the input size. Note that it is straightforward to
obtain a dynamic programming algorithm that runs in~$O( n^{k+1} )$
time; our algorithm improves this significantly.

Finally, we consider the setting with resource augmentation, where the
algorithm is given a larger buffer than the optimal one.  We give a
new LP formulation for the sorting buffer problem and show that it can
be rounded using a larger buffer size. This gives
an~$O(1/\epsilon)$-approximate algorithm using a buffer of
size~$(2+\epsilon)$ times that of optimal.

\paragraph{Organization.} In Section~\ref{sec:np}, we show that the
sorting buffer problem and its generalization are NP-hard. 
In Section~\ref{sec:lp}, we provide the LP and the constant factor approximation algorithm using a larger buffer size,
and finally we give in Section~\ref{sec:dp} the dynamic
programming algorithm for~$k=2$. 
We present the lower bound on the approximation ratio of LFD in
Section~\ref{sec:lfd}. We conclude with some open
problems in Section~\ref{sec:conclusion}.

%
%
%
%

\section{Complexity}

\label{sec:np}

\begin{theorem}\label{thm:np-hard}
  The sorting buffer problem is strongly~$NP$-hard.
\end{theorem}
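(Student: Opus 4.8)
The plan is to reduce from 3-Partition, which the excerpt already identifies as the intended source problem. Recall an instance of 3-Partition: we are given $3m$ positive integers $a_1,\dots,a_{3m}$ and a bound $B$ with $\sum_j a_j = mB$ and $B/4 < a_j < B/2$ for every $j$, and the question is whether the $a_j$ can be partitioned into $m$ triples each summing exactly to $B$. Since 3-Partition is strongly NP-hard, it suffices to build a sorting-buffer instance of size polynomial in $m$ and $B$ whose optimal number of color changes attains a certain target value if and only if the 3-Partition instance is a yes-instance.

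First I would design, for each integer $a_j$, an "item gadget": a block of $a_j$ items all of a private color $c_j$, so that these items must eventually be served but can be held in the buffer and flushed only when the buffer is about to overflow. Interleaved with these, I would place $m$ copies of a "separator gadget" — a long run of items in a shared color (or $m$ distinct dummy colors) whose purpose is to force the buffer to be cleared at $m$ well-spaced moments in the stream. Choosing the buffer size $k$ to be roughly $B$ (plus a small slack), the key quantitative point is that between two consecutive separators the scheduler can carry forward at most a bounded amount of mass; a yes-instance of 3-Partition lets exactly three item gadgets' worth of color $c_j$ items ($a_{j_1}+a_{j_2}+a_{j_3} = B$) be accumulated and flushed with a single color change per color, whereas any packing that does not split into triples summing to $B$ forces at least one extra color change because some gadget's color must be "touched" twice — once before and once after a separator. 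I expect to need careful bookkeeping of when items enter the buffer versus when they are served, and to pad colors with enough copies that the optimum is forced to keep each item-gadget color contiguous unless overflow demands otherwise.

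The target cost is then the number of separator changes plus exactly $3m$ (one flush per item gadget); the reduction succeeds if I can prove (i) a yes-instance yields a schedule meeting this target, which is the easy direction — just follow the triples — and (ii) any schedule of cost at most the target induces a valid partition into triples of sum $B$, which is the hard direction. The main obstacle is establishing (ii): I must rule out clever schedules that gain by splitting a separator's color changes, or by deferring item-gadget flushes across multiple separators, or by exploiting the buffer slack to mix colors in ways that still hit the target. Handling this will require an exchange/normalization argument showing that any optimal schedule can be transformed, without increasing cost, into one of the "canonical" form (separators flushed wholesale, item-gadget colors flushed exactly once each in the window where their block enters), after which a counting argument forces each window to absorb mass exactly $B$, i.e., exactly three of the $a_j$. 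A secondary technical point is to confirm the construction is strongly NP-hard — i.e., all numbers in the sorting-buffer instance are polynomially bounded in $m$ and $B$ — which holds as long as the separator runs and color-multiplicities are chosen of size $\mathrm{poly}(m,B)$ rather than exponential.
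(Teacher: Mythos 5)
There is a genuine gap, and it lies in the construction itself rather than in the admittedly unfinished ``hard direction.'' As you describe it, each integer $a_j$ becomes one \emph{contiguous} block of $a_j$ items of a private color $c_j$, interleaved with separator runs. But a color whose items occur in a single contiguous block always costs exactly one switch, regardless of buffer size or of how anything else is scheduled: the scheduler simply picks $c_j$ while the block is streaming in, and all of its items pass straight through. Likewise each separator costs exactly one. Hence the optimum of your instance is the same fixed number ($3m$ plus the number of separators) for every 3-Partition instance, yes or no, so the reduction cannot distinguish them. The tension you hope for (``some gadget's color must be touched twice, once before and once after a separator'') never materializes, because nothing forces any color's items to be held in the buffer across a separator in the first place. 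What is missing is a mechanism that makes the \emph{amount of free buffer space} at prescribed moments show up in the objective value.

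The paper's construction supplies exactly the two ingredients your sketch lacks. First, each primary color $j$ appears \emph{many times spread over the whole sequence}: a large block of $b_j=La_j$ items up front (the sequence $\beta$, which exactly fills the main part of the buffer) and then $a_j$ further items in every later block $\alpha_1,\dots,\alpha_{q+1}$; serving color $j$ only twice in total therefore forces the scheduler to hoard its later items, and \emph{when} it first serves $j$ determines how many of its items accumulate. Second, the sequences $\gamma_i$ consist of $iB$ (and, at the end, exactly $V-M$) \emph{unique colors each appearing twice}; such a color costs one switch if both copies meet in the buffer and two otherwise, so $\gamma_i$ acts as a ruler measuring whether at least $iB$ space has been freed before time $\delta_i$, and the final $\gamma_{q+1}$ calibrates the total accumulated primary mass to be exactly $M=\frac12 q(q+1)A$, which forces equality $\sum_{j\in H_i}a_j=iA$ for every $i$ and hence a valid partition. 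The scaling by $L$ (with $q^2A\le\epsilon\le L/2$) makes any violation cost at least $L/2>3q$ extra switches, i.e., more than the slack in the target $C+3q$, while keeping all numbers polynomially bounded so that strong NP-hardness is preserved. Without analogues of the spread-out primary colors and the doubled-unique-color ``space gauges,'' no exchange or normalization argument can rescue your construction, because its optimum simply does not depend on the partition structure.
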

\begin{proof}
  We reduce from 3-Partition which is known to be strongly
  NP-hard~\cite{gareyJ79:book}: Given $3\qq$ positive
  integers~$a_1,a_2,\ldots,a_{3\qq }$ and an integer~$A$ such
  that~$a_1+a_2+\ldots+a_{3\qq }=\qq A$  can
  we partition $\{1,2,\dots, 3\qq\}$ into~$\qq $ sets~$I_i$ such that~$\sum_{j\in
    I_i}a_j=A$ for all~$i\in \{1,2,\dots,\qq \}$?

  Given an instance of 3-Partition, we construct an instance $\sigma$ for the
   sorting buffer problem as follows. We multiply all numbers by a
  large number~$L$. Let~$b_j=La_j$ for all~$j$ and~$B=LA$. We define
  the buffer size as~$\VV=\qq B+\epsilon$, with~$\qq ^2A\le\epsilon\le L/2$.
  We see the buffer as having a \emph{main} part of capacity~$\qq B$ and
  an \emph{extra} part of capacity~$\epsilon$. For each $j\in \{1,2,\dots,3\qq\}$ we define a color $j$. We call these the \emph{primary} colors. The sequence contains many
other colors but we we do not label those explicitly. We call these the \emph{secondary} colors.
  The input sequence is defined by $3\qq+4$ subsequences:
  \[
  \sigma=\beta\ \gamma_1\delta_1\alpha_1\ \gamma_2\delta_2\alpha_2\
  \dots\ \gamma_{\qq }\delta_{\qq }\alpha_{\qq }\
  \gamma_{{\qq }\!+\!1}\delta_{{\qq }\!+\!1}\alpha_{{\qq }\!+\!1}.
  \]
The subsequences are defined as follows. \begin{itemize}
  \item[$\beta$]  contains~$b_j$ items of color~$j$
    for each~$j\in\{1,2,\dots,3{\qq }\}$. Items are given in arbitrary order. Note that the total number of items equals  is ~${\qq }B$ which
    is the size of the main part of the buffer.
  \item[$\alpha_i$] (i=1\dots \qq+1)  contains~$a_j$ items of color~$j$ for
    each~$j\in\{1,2,\dots,3{\qq }\}$. Again, the order is arbitrary.
  \item[$\gamma_i$] (i=1\dots \qq+1) We distinguish between $i\le \qq$ and $i=\qq+1$. For $i\le \qq$ it starts with~$iB$ items of different colors
    followed by again one item of each of these colors. Any color used in~$\gamma_i$ is unique in the sense that it appears
    twice in~$\gamma_i$ and nowhere else in the sequence~$\sigma$. Sequence $\gamma_{\qq+1}$ is defined exactly the same but
    the number of colors is now two times $\VV-M$, where $M=\frac{1}{2}\qq (\qq +1)A$.
  \item[$\delta_i$] (i=1\dots \qq+1) contains~$\VV$ items of the same color. This color
    is not used anywhere else in~$\sigma$.
  \end{itemize}
Let us, just for clarity, count the number of colors in~$\sigma$. The subsequences~$\beta$ and~$\alpha_i$ contain the~$3{\qq }$
primary colors.  The sequences~$\delta_i$
  together contain~${\qq }+1$ colors. A sequence~$\gamma_i$ contains~$iB$
  colors for~$i\le {\qq }$ and~$\VV-M$ colors for~$i={\qq }+1$~(each color twice).
  The total number of colors in the sequence~$\sigma$
  is~$C=3{\qq }+{\qq }\!+\!1+(\sum_{i=1}^{\qq } iB)+\VV-M$.

We list some properties that any optimal solution $\opt$ has.
  \begin{lemma}
   Before the first item of~$\alpha_i$ enters the buffer, $\opt$
    must have used the color of~$\delta_i$.
\end{lemma}
\begin{proof}
     The length of~$\delta_i$ is equal to the buffer size. \qed \end{proof}
We remark that the reduction would be valid
    without the subsequences $\delta_1,\dots,\delta_{\qq}$. However, these subsequences gives us separations of the server
    sequence which enhance the analysis. Let us call the moment that the server switches to color $\delta_i$ simply as time
    $\delta_i$.
  \begin{lemma}
  We may assume that $\opt$ serves~$\gamma_i$ completely before time~$\delta_i$.
\end{lemma}
\begin{proof}
   Since the items of~$\gamma_i$ that remain in the buffer when $\opt$ switches to color~$\delta_i$
    cannot be combined with items arriving later, we may as well serve these remaining items before switching to~$\delta_i$.
\qed \end{proof}

    It follows from the preceding two lemmas that we may assume
    that
  \begin{lemma}
    $\opt$ serves the sequences~$\gamma_i$ and~$\delta_i$ in the order~$\gamma_1 \delta_1 \gamma_2 \delta_2 \dots \gamma_{{\qq }\!+\!1}\delta_{{\qq }\!+\!1}$.
\end{lemma}

Assume that $\sigma$ can be served with at most $C+3q$ color switches. We shall prove that a 3-Partition exists. For~$1\le i\le
{\qq }$, let~$H_i$ be the set of primary colors used  before time~$\delta_i$.

\begin{lemma}\label{lemma1}
 \begin{equation}\label{eq1}
    \sum_{j\in H_i} a_j\ge iA\ \text{ for all }i\in \{1,2,\dots,{\qq }\}.
  \end{equation}
\end{lemma}
\begin{proof}
Assume that~$\sum_{j\in H_i} a_j\le iA-1$ for some $i$.  Then,~$\sum_{j\in H_i} b_j\le
  L(iA-1)=iB-L$. This means that from the $\qq B$ items of $\beta$ at most $iB-L$ items are removed before time $\delta_i$. Hence, the free space we have to serve~$\gamma_i$ is
  certainly no more than~$iB-L+\epsilon\le iB-L/2$. But then at least $L/2$ colors of~$\gamma_i$ must be used more than once.
  The total number of color switches will be at least $C+L/2>C+3{\qq }$~(for~${\qq }\ge 2$).\qed
\end{proof}

\begin{lemma}\label{lem:prim 2 second 1}
Every primary color is used exactly two times: once before time $\delta_{\qq}$ and once after time $\delta_{\qq+1}$. Every
secondary color is used exactly once.
\end{lemma}
\begin{proof}
Taking $i=\qq$ in Lemma~\ref{lemma1} we see that all $3\qq$ primary colors must be used before time $\delta_{\qq}$. Further,
each primary color must also be chosen at least once after switching to~$\delta_{\qq+1}$ since~$\alpha_{{\qq }+1}$ contains all
primary colors and is served after the switch to~$\delta_{\qq+1}$. We see that the bound of $C+3\qq$ can only be reached if the
statement in the lemma holds.\qed
\end{proof}

Let~$I_1=H_1$ and~$I_i=H_i\setminus H_{i-1}$ for~$2\le i\le {\qq }$,
  i.e., the set of primary colors used between~$\delta_{i-1}$
  and~$\delta_{i}$. Consider the buffer contents at time $\delta_{\qq+1}$. If $j\in I_i$ then the buffer contains at least $(\qq+1-i)a_j$ items of
color $j$.  The total
  number of primary colored items in the buffer is at least
    \begin{equation}\label{eq2}
    \sum_{i=1}^{{\qq }}\sum_{j\in I_i} ({\qq }-i+1)a_j=\sum_{i=1}^{\qq } \sum_{j\in H_i}a_j\ge \sum_{i=1}^{\qq }iA=\frac{1}{2}\qq (\qq +1)A=M.
      \end{equation}
If~\eqref{eq2} holds with strict inequality, then at least one color of $\gamma_{\qq+1}$ is used twice which contradicts
Lemma~\ref{lem:prim 2 second 1}. Hence, equality holds and this can only be true if equality in~\eqref{eq1} holds for all $i$.
This implies that $\sum_{j\in I_i} a_j=A$ for all $i$. Hence, a 3-Partition exists.

The other direction is easily verified. Assume that a 3-Partition $I_1,\dots,I_{\qq}$ exist. Then we have equality
in~\eqref{eq1}. The server takes the colors in $I_i$ just before $\gamma_i$. Note that all items from the $\alpha_i$'s fit
together in the extra part of the buffer. Hence, the free space in the buffer just before $\gamma_i$ is at least $iB$ for all
$i\le \qq$ and $\gamma_i$ can be served such that each color is used only once. All primary colors are used exactly twice.
 \qed
\end{proof}

The NP-hardness of buffer sorting extends to the generalized sorting buffer problem with multiple servers~$m$, even if~$m$ is
constant.

\begin{theorem}\label{thm:hardness-general}
  The generalized sorting buffer problem is~$NP$-hard for any number
  of servers~$m\ge 1$.
\end{theorem}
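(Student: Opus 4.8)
The plan is to reduce again from 3-Partition, reusing the very instance $\sigma$ and the buffer size $\VV$ built in the proof of Theorem~\ref{thm:np-hard} — recall that that proof shows $\sigma$ admits a \emph{single}-server schedule with at most $C+3\qq$ color changes if and only if the 3-Partition instance is a yes-instance — and to ``occupy'' the $m-1$ additional servers with fresh colors. For $m=1$ there is nothing to do. For $m\ge 2$, introduce $m-1$ new \emph{dummy} colors $d_1,\dots,d_{m-1}$ not used in $\sigma$, and let $\sigma'$ be obtained from $\sigma$ by inserting, immediately after every item of $\sigma$ (and once at the very front), a block of $\VV+1$ items of color $d_1$, then $\VV+1$ items of color $d_2$, \dots, then $\VV+1$ items of color $d_{m-1}$. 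Keep the buffer size $\VV$ and set the target to $\kappa':=(C+3\qq)+(m-1)$. We also take the scaling factor $L$ large enough that $n:=|\sigma'|$ exceeds $\kappa'$; this only needs $L=\mathrm{poly}(\qq,A,m)$ and is compatible with the requirement $\qq^2A\le\epsilon\le L/2$. The reduction is polynomial whenever $m$ is given in unary, in particular for every constant $m$.

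If the 3-Partition instance is a yes-instance, take the single-server schedule for $\sigma$ with $C+3\qq$ color changes, dedicate servers $2,\dots,m$ permanently to $d_1,\dots,d_{m-1}$ (these $m-1$ loads bring them into use), and let server~$1$ follow the single-server schedule. Since the owner of each dummy color can flush its items for free the moment they arrive, the dummy items do not interfere with the buffer budget available to server~$1$, which therefore effectively faces the original sequence $\sigma$ inside a buffer of size $\VV$. The total number of color changes is $(m-1)+(C+3\qq)=\kappa'$.

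Conversely, suppose $\sigma'$ admits a schedule $S$ with at most $\kappa'$ color changes. The key claim is that $S$ may be assumed to keep all $m-1$ dummy colors permanently loaded. Indeed, since each block contains $\VV+1>\VV$ items of one dummy color, that color cannot be stored in the size-$\VV$ buffer during the block, so it must be (re)loaded in every gap of $\sigma$ throughout which it is not already resident on a server; hence un-pinning a dummy color from its server over a span covering $g$ gaps costs $\Omega(g)$ extra color changes. On the other hand, making a server temporarily available to non-dummy colors over such a span can lower the amount of free buffer space demanded during the subsequences $\gamma_i$ by the argument of Theorem~\ref{thm:np-hard} by at most one unit per un-pinned server — because every color of every $\gamma_i$ occurs only twice — so it can save only a bounded amount there. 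As $n>\kappa'$, the losses dominate, so no budget-$\kappa'$ schedule un-pins a dummy for a span of positive length. With all $m-1$ dummies permanently parked, exactly one server is ever available for the colors of $\sigma$, and the color carried by that free server-slot persists across gaps (changing it costs one color change, exactly as in the single-server model); hence $S$ restricted to the non-dummy colors is a single-server schedule for $\sigma$ with buffer $\VV$ using at most $\kappa'-(m-1)=C+3\qq$ color changes. By Theorem~\ref{thm:np-hard} the 3-Partition instance is a yes-instance.

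The main obstacle is the quantitative comparison in the previous paragraph: one must argue carefully that repeatedly un-pinning dummy colors — the only way to give $\sigma$ more than one effective server — is always a net loss within the budget $\kappa'$. This rests on (i) the ``$\VV+1$ items per block'' design, which charges $\Omega(g)$ for an un-pinning span of $g$ gaps; (ii) re-examining the buffer-pressure bounds in the proof of Theorem~\ref{thm:np-hard} — notably Lemma~\ref{lemma1} and inequalities~\eqref{eq1}--\eqref{eq2} — to check that a bounded additive increase of the effective buffer/server capacity weakens them only additively, since the $\gamma_i$-colors are used only twice each; and (iii) choosing $L$ so large that $n=|\sigma'|$ dominates all relevant quantities, in particular $\kappa'$ and the slack $3\qq$ inherited from the original instance. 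Everything else is a routine verification.
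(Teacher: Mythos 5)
Your construction (pad the Theorem~\ref{thm:np-hard} instance with $m-1$ dummy colors in blocks of length $\VV+1$ so that they ``eat'' $m-1$ servers) is a plausible plan, and the yes-direction is fine. The genuine gap is in the converse, at exactly the step you flag yourself: the claim that no schedule within budget $\kappa'$ ever un-pins a dummy. The justification offered does not work. First, ``as $n>\kappa'$, the losses dominate'' is beside the point: un-pinning a dummy for a single gap costs exactly \emph{one} extra color change, not something growing with $n$, so making $L$ (hence $n$) huge does not by itself exclude it; what must be shown is that not even one unit of extra slack is available. Second, that unavailability is not free: the slack in $\kappa'=C+3\qq+(m-1)$ beyond one use per color is exactly $3\qq$, and it is only fully consumed once one knows that every primary color must be used at least twice; but that fact (Lemma~\ref{lemma1} and Lemma~\ref{lem:prim 2 second 1}) was proved for a single server and must be re-established for \emph{arbitrary} $m$-server schedules \emph{before} you are allowed to assume the dummies are pinned --- otherwise the argument is circular. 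Concretely, with several servers a $\gamma_i$-color can be served once not only by buffering its first copy until the second arrives, but also by ``parking'' it on a server across the whole span between its two copies; your assertion that this saves ``at most one unit per un-pinned server'' is precisely the quantitative statement that needs a proof. (The correct version is that at any fixed moment at most $m$ colors can be carried on servers, so the count in Lemma~\ref{lemma1} degrades from $L/2$ to at least $L/2-m$, which still exceeds $3\qq$ when $L\gg \qq+m$; note the relevant largeness condition concerns $L$ versus $\qq+m$, not $n$ versus $\kappa'$.) One must then also observe that zero residual slack forces each dummy into a single active interval spanning essentially the whole sequence, so the $m-1$ dummies freeze $m-1$ \emph{distinct} servers, after which the restriction of the schedule to the colors of $\sigma$ really is a one-server schedule of cost at most $C+3\qq$ and Theorem~\ref{thm:np-hard} applies. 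Your sketch gestures at these points but does not carry them out, and the heuristic put in their place is incorrect as stated.

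For comparison, the paper avoids reopening the 3-Partition analysis altogether: it argues by induction on $m$, showing for an \emph{arbitrary} instance $\rho$ of the $m$-server problem that inserting $k$ items of one fresh color $x$ between consecutive items yields an instance $\rho'$ for $m+1$ servers whose optimum is exactly one larger; the heart is an exchange argument (swapping output subsequences between two servers) proving one may assume a single server serves only color $x$. That route needs no re-examination of Lemma~\ref{lemma1} or of~\eqref{eq1}--\eqref{eq2}, which is exactly the burden your direct construction takes on and leaves unproved.
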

\begin{proof} The idea is to show that the problem with~$m$ servers
  and buffer size~$\VV$ can be reduced to the problem with~$m+1$ servers
  and the same buffer size.  The theorem then follows by induction
  with the induction basis~(case~$m=1$ is NP-hard) given in
  Theorem~\ref{thm:np-hard}. 

  Assume~$m=\ell$ is NP-hard for some integer~$\ell >=1$. Consider an
  arbitrary sequence~$\rho$ for the case~$m=\ell$. We take a color~$x$
  not used in~$\rho$ and add~$k$ items of color~$x$ between any two
  consecutive items in~$\rho$.  Let the resulting sequence be~$\rho'$.
  We claim that it is optimal to~$\rho'$ to let one server serve only
  color~$x$ and the other ones the remaining colors. Suppose this were
  true, then~$\rho$ can be served using~$\ell$ servers with minimum
  cost~$z$ if and only if~$\rho'$ can be served using~$\ell+1$ servers
  with minimum cost~$z+1$. Hence, the case for~$m=\ell+1$ is also
  NP-hard.

  It is left to prove the claim. For the sake of contradiction,
  suppose there is an optimal solution~$\opt$ to sequence~$\rho'$ which
  does not serve all items of color~$x$ by the same server.
  Let~$m_1$ be the server which serves the first item of color~$x$
  in~$\rho'$. Consider the first moment in which an item preempts
  the sequence of consecutively serving color~$x$ by~$m_1$, i.e., an
  item~$i$ of color~$c(i)\ne x$ is assigned to~$m_1$. Let~$S$ be
  the set of items that are in the buffer at that moment. We can
  assume that the next item~$j$ that enters the buffer is the first
  of~$k$ consecutively incoming color-$x$ items. (Whenever we
  remove one color-$x$ item from the buffer, then we can serve all
  of them without extra cost.) Hence, with the buffer capacity~$k$,
  $\opt$ must serve at least one~(and thus w.l.o.g.\ all) items of
  color~$x$ before an new item with color different from~$x$ can
  enter the buffer.

  Consider the schedule after $\opt$ served the color~$x$ items by
  some server, say~$m_2$. Suppose~$m_1\ne m_2$. While the current
  color of~$m_2$ is~$x$, server~$m_1$ might have served after~$i$ some
  items of the same or other colors from~$S$; let~$i'$ be the last
  item assigned to~$m_1$ so far. Now, we simply exchange the
  current output sequence on server~$m_1$ from item~$i$ up to~$i'$,
  with the sequence of color-$x$ items on~$m_2$.  This is feasible
  since we only swap output positions of items in~$S$ that are in
  the buffer or enter with the same color~$x$.  Note, that the
  currently active colors of the servers are not changed. Moreover,
  the cost of the schedule can only decrease: Moving the color-$x$
  items to~$m_1$ reduces the cost by one and moving the sequence
  starting with item~$i$ to~$m_2$ does not cause a new color
  change. Thus, $\opt$ was not an optimal solution.

  If~$m_1=m_2$, then we extract from the output sequence on~$m_1$ the
  subsequence~$i$ up to~$i'$, and assign it to the end of the current
  sequence of some server, say~$m_2$.  Clearly, the current color
  of~$m_2$ changes and may cause an additional unit of cost when \opt
  assigns the next item to~$m_2$. However, we reduce the cost by
  one unit when removing the color change on~$m_1$ for switching back
  to color~$x$. Thus, the cost do not increase. This exchange can be
  applied iteratively to an optimal solution until no items of
  a color different from~$x$ is assigned to~$m_1$.\qed
\end{proof}


\section{Resource Augmentation}
\label{sec:lp}

In this section, we give an LP-based algorithm which yields
an~$O(1/\epsilon)$-approximation with respect to the optimal solution
that uses no more than~$1/2-2\epsilon$ times the original buffer size.
By scaling up the buffer size by a factor of $2+O(\epsilon)$, it gives
an $O(1/\epsilon)$-approximate algorithm using a buffer size of
$2+\epsilon$ times that of optimal.

We first introduce a new LP relaxation, followed by a rounding scheme.
We consider that the buffer is empty initially. For each time step $i
= 1, 2, \dots, n$, the following three events occur. (1) The $i$-th
item is moved to the buffer, (2) the algorithm chooses $c(i)$ to be
the color of the buffer, and (3) all items in the buffer with color
$c(i)$ are removed. Call an interval a $c$-interval if the color of
the buffer is $c$ throughout the interval and call it \emph{non}-$c$ if
the color is not $c$ throughout the interval. The cost for serving a
color $c$ is the number of maximal $c$-intervals. Note that the cost
over all colors is exactly $2-C$ plus the number of maximal non-$c$
intervals.  One observation is that after each time step $i = 1, 2,
\dots, n$, the number of items in the buffer should be at most $k-1$.
It motivates the following IP.


We define a variable~${ y^c_{s,t}}$ for every color~$c$ and time steps
$s,t$ with~$1\le s\le t\le n$. ${ y^c_{s,t}}$ should be one if $[s,t]$
is a maximal non-$c$ interval; and it is zero otherwise. For each
color~$c$ and time step~$s\le i$, let~$A^c_{s,i}$ be the number of
items with color~$c$ moved into the buffer during $[s,i]$.
\begin{align}
  \text{minimize }\ &2-C+\sum_{c}\mathop{\sum\limits_{s,t:}}_{s\le t}
  { y^c_{s,t}}& \nonumber
  \\[2mm]
  \text{subject to } &\mathop{\sum\limits_{s,t:}}_{s\le t;\ s\le i+1;\
    i\le t} { y^c_{s,t}} \le 1 & \text{ for all } c \text{ and }
  i=1,2,\dots, n+k-1 \label{eq:LP1}
  \\[2mm]
  &\sum_{c}\mathop{\sum\limits_{s,t:}}_{s\le i\le t} { y^c_{s,t}}= C-1
  &\text{ for all }i=1,2,\dots, n+k-1\label{eq:LP2}
  \\[2mm]
  &\sum_{c}\mathop{\sum\limits_{s,t:}}_{s\le i\le t} A^c_{s,i} {
    y^c_{s,t}} \le k-1 & \text{ for all }i=1,2,\dots, n-1
  \label{eq:LP3}
  \\[2mm]
  &\mathop{\sum\limits_{s:}}_{s\le i} A^c_{s,i} { y^c_{s,i}} = 0 &
  \text{ for all }c\text{ and } i=n+k-1 \label{eq:LP3a}
  \\[2mm]
  & { y^c_{s,t}}\in \{0,1\}& \text{ for all }c\text{ and } s,t\in
  \{1,2,\dots,n+k-1\}.\label{eq:LP4}
\end{align}

%
The first constraint~\eqref{eq:LP1} ensures two things: (i) for any
color $c$ and time $i$, $i$ is included in at most one maximal non-$c$
interval and (ii) maximal non-$c$ intervals are really maximal, i.e.
if $y^c_{s,t}=y^c_{u,v}=1$ then $t\le u+2$ or $v\le s+2$. By (i), each
color $c$ contributes at most 1 to the left hand side of the second
constraint~\eqref{eq:LP2}. Hence this constraint ensures that at any
time $i$, the color of the buffer is different from exactly~$C-1$
colors. Constraint~\eqref{eq:LP3} ensures that by the end of each time
step $i\le n-1$, the number of items remaining in the buffer is at
most $k-1$ and constraint~\eqref{eq:LP3a} ensures that the buffer is
empty at the end.  It is easy to verify that for any valid schedule we
can set the values of ${ y^c_{s,t}}$ according to whether it is a
maximal non-$c$ interval and this satisfies all the constraints.
Reversely, any IP-solution corresponds with a feasible coloring
sequence with the same cost. The LP-relaxation is obtained by
replacing~\eqref{eq:LP4} with ${ y^c_{s,t}}\ge 0$.  It is easy to
verify that any LP-solution has value at least $C$. We can round the
LP to get an~$O(1/\epsilon)$-approximation against an optimal solution
that uses no more than~$1/2-2\epsilon$ times the buffer size. Define
\[
x^c_i=\mathop{\sum\limits_{s,t:}}_{s\le i\le t} { y^c_{s,t}}.
\]
Intuitively,~$1-x_i^c$ is the fraction of color~$c$ on the machine
at step~$i$. Further, define
\[
z^c_i=\mathop{\sum\limits_{s:}}_{1\le s\le i} { y^c_{s,i}}, \text{
  and } Z_i^c=\sum_{j=1}^{i}z^c_j.
\]
The variable~$z_i^c$ sums over all intervals ending in~$i$ and the
variable~$Z_i^c$ sums over all intervals ending in~$i$ or before that.
In particular,~$Z_n^c$ is the LP-cost for color~$c$. The value~$Z_i^c$
is non-decreasing in~$i$. We mark every step that~$Z_i^c$ increases by
another~$\epsilon$. More precisely, mark the first step~$i$ for
which~$Z_{i}^c\ge \epsilon$ and mark every next step~$i'$ for
which~$Z_{i'}^c$ has increased by at least~$\epsilon$ since the last
marking.

A feasible integral solution is found by the following rounding scheme.

\begin{center}
  \begin{minipage}{\textwidth}
    \rule{\textwidth}{0.5pt}\\
    {\sc \bf LP Rounding.} Start with an arbitrary buffer color. For~$i=1$ to~$n+k-1$ do: \\[-2ex]
    \begin{enumerate}
    \item[(i)]Remove all items with the current color (state) of the
      buffer. 
    \item[(ii)] For each marked color~$c$, remove all its items.
    \item[(iii)] If~$x_i^{c'}\le 1/2-\epsilon$ for some~$c'$, then
      switch the color to~$c'$ and remove all items with color~$c'$.
      \\[-3ex]
    \end{enumerate}
    \rule[1ex]{\textwidth}{0.5pt}
  \end{minipage}
\end{center}

\begin{theorem}
  The LP Rounding Algorithm applied to an optimal LP solution yields
  an $O(1/\epsilon)$-approx\-imate solution for the sorting buffer
  problem when the optimum is using a buffer of size at
  most~$1/2-2\epsilon$ times the original buffer size~$k$.
\end{theorem}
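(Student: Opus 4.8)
The plan is to derive the theorem from two properties of the schedule produced by \textsc{LP Rounding} applied to an optimal LP solution: \textbf{(a)} it is feasible for the original buffer of size~$k$, and \textbf{(b)} its cost is $O(1/\epsilon)$ times the LP optimum~$Z^*$. For the lower bound I would instantiate the LP with the buffer parameter set to $\kappa:=(1/2-2\epsilon)k$ (the size used by the comparison optimum); as argued in the discussion preceding the theorem, any schedule with buffer~$\kappa$ yields a feasible $0/1$ point of the same cost, so $Z^*\le\opt$, where $\opt$ is the cost of an optimal schedule with buffer at most~$\kappa$. Thus once (a) and (b) are proved we get a schedule of cost $O(Z^*/\epsilon)=O(\opt/\epsilon)$ using buffer~$k$.

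For feasibility I would first isolate two structural facts about the LP solution. First, by~\eqref{eq:LP2} we have $\sum_c(1-x_i^c)=1$, so at any step~$i$ at most one color can satisfy $x_i^c\le 1/2-\epsilon$; consequently, if the algorithm has not removed color~$c$ since some step~$r$, then $x_j^c>1/2-\epsilon$ for every $j\in(r,i]$ (otherwise~$c$ would be the unique low color at that step and would not be current, so step~(iii) would serve it). Second, by~\eqref{eq:LP1}(ii) the supports of the $y^c_{\cdot,\cdot}$ are almost laminar (pairwise overlaps of at most two steps), so $x^c$ is essentially a step function whose pieces are the LP's maximal non-$c$ intervals, and a downward jump of $x^c$ by more than~$\epsilon$ forces $Z^c$ to increase by at least that amount, hence marks a step and triggers a removal of~$c$ in step~(ii). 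Combining these: between two consecutive removals of~$c$ by the algorithm, the LP carries (up to tiny intervals of total weight below~$\epsilon$) a single maximal non-$c$ interval, of weight exceeding $1/2-\epsilon$, that starts no later than $r+1$. Plugging this into~\eqref{eq:LP3} at step~$i$ shows that the term for~$c$ on the left-hand side is at least $(1/2-\epsilon)$ times the algorithm's buffer content of~$c$, so the total content is at most $(\kappa-1)/(1/2-\epsilon)<k$; the gap between $1/2-2\epsilon$ and $1/2-\epsilon$ absorbs the lower-order effects of the two-step overlaps and the tiny intervals.

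For the cost bound I would split the color changes into those made in step~(ii) and those made in step~(iii). In step~(ii), color~$c$ is served at most once per step marked for~$c$, and since $Z^c$ grows by at least~$\epsilon$ between consecutive marks there are at most $Z_n^c/\epsilon+1$ such steps; summing over~$c$ and using $\sum_c Z_n^c=Z^*+C-2\le 2Z^*$ together with $Z^*\ge C$ gives $O(Z^*/\epsilon)$. In step~(iii), a switch that immediately follows an interruption by a marked color is charged to that mark (another $O(Z^*/\epsilon)$); the remaining (``fresh'') switches occur only when the unique low color genuinely changes from one color to another that is not currently served, because an oscillation of $x^c$ about the threshold while~$c$ is still current costs nothing. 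For each~$c$ I would bound the number of fresh switches to~$c$ by the number of maximal intervals on which $x^c\le 1/2-\epsilon$, and argue that between two such intervals $Z^c$ grows by more than $1/2-\epsilon$: in the intervening period $x^c$ exceeds $1/2-\epsilon$, so by the almost-laminar structure the LP runs and then terminates a non-$c$ interval of weight exceeding $1/2-\epsilon$, which contributes to~$Z^c$. Summing gives $O(Z^*)$, and the total cost is $O(Z^*/\epsilon)$.

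I expect the step-(iii) count to be the main obstacle. The delicate point is that $x^c$ may cross the threshold $1/2-\epsilon$ many times with negligible amplitude while contributing almost nothing to~$Z^c$, so the counting only works after one has established that such crossings produce no color change (since~$c$ stays the current color) and that a genuine transfer of the ``low'' status between two distinct colors does force the laminar LP to end a heavy non-$c$ interval. Making this accounting precise, including the boundary cases created by the two-step slack in~\eqref{eq:LP1} and by very short high-$x^c$ gaps, is where most of the work lies; the feasibility direction reuses the same structural lemmas and should then follow with little additional effort.
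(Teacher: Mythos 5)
Your overall architecture is the paper's: instantiate the LP with the smaller buffer as the lower bound, charge the marked steps at $O(1/\epsilon)$ per unit of $\sum_c Z^c_n$, charge a type-(iii) switch that immediately follows a mark to that mark, and prove feasibility by showing that the $y$-weight of intervals straddling $[r+1,j]$ (where $r$ is the last removal of $c$) exceeds $1/2-2\epsilon$ and plugging this into constraint~\eqref{eq:LP3}. The feasibility half is essentially correct as you sketch it: the only facts needed are $x^c_{r+1}>1/2-\epsilon$ (since $c$ was not removed at $r+1$), the absence of marks giving $Z^c_{j}-Z^c_{r}<\epsilon$, and the elementary counting inequality that a decrease of $x^c$ must be paid for by intervals ending, i.e.\ $x^c_j-x^c_i\ge -(Z^c_{j-1}-Z^c_{i-1})$. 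No ``almost laminar'' structure is needed here, nor is it available: for the \emph{relaxation}, constraint~\eqref{eq:LP1} only bounds the total fractional overlap at consecutive positions by $1$; the maximality interpretation (ii) is a statement about integral solutions.

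The genuine gap is in your count of the ``fresh'' type-(iii) switches. Bounding the fresh switches to $c$ by the number of maximal intervals on which $x^c\le 1/2-\epsilon$ is fine (inside such an interval any switch away from $c$ must be a mark, by uniqueness of the low color), but the next step --- that between two consecutive maximal low intervals $Z^c$ grows by more than $1/2-\epsilon$ because the LP must ``terminate a non-$c$ interval of weight exceeding $1/2-\epsilon$'' --- is false for a fractional solution. Nothing forces the heavy mass of non-$c$ intervals alive during the high period to end before the next low interval; only the actual drop of $x^c$ must be paid by interval endings, and $x^c$ can oscillate across the threshold with arbitrarily small amplitude, creating many maximal low intervals while $Z^c$ grows by almost nothing. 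So the per-color bound you propose does not follow, and refining the laminarity argument will not rescue it. The correct charging (the paper's) keys on the \emph{previous switch}, not the previous low interval: a fresh switch to $c''$ at step $j$ is preceded by a type-(iii) switch to some $c'\neq c''$ at step $i$, and there $x^{c'}_i\le 1/2-\epsilon$ forces $x^{c''}_i\ge 1/2+\epsilon$ by uniqueness; hence $x^{c''}$ drops by at least $2\epsilon$ between $i$ and $j$, so $Z^{c''}_{j-1}-Z^{c''}_{i-1}\ge 2\epsilon$ by the counting inequality above, and since these windows are disjoint across consecutive switch pairs, the fresh switches total $O(1/\epsilon)$ times $\sum_c Z^c_n$, i.e.\ $O(1/\epsilon)$ times the LP cost. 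Note the guaranteed increase is only $2\epsilon$, not $1/2-\epsilon$: your oscillation worry is exactly the right concern, and its resolution is this change of charging rather than the structural claim you propose.
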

\begin{proof}
  First we argue that~(iii) is well defined. Constraint~\eqref{eq:LP2}
  states that~$\sum_{c}x_i^c\ge C-1$ and~\eqref{eq:LP1}
  states~$x_i^c\le 1$.  Hence, there is at most one~$c'$ for
  which~$x_i^{c'}\le 1/2-\epsilon$.$\ (*)$

  The first step (i) is done for free, and one can easily verify that
  only the just entered item is possibly removed in this step.
  Clearly, the number of markings is~$O(1/\epsilon)$ times the LP
  cost. Consider two consecutive switches. If at least one of the two
  is due to a marking then we charge both to the marking. To prove
  that the total number of switches is~$O(1/\epsilon)$ times the LP
  cost we only need to bound the number of pairs of consecutive
  switches in which both are of type~(iii). Assume the buffer switches
  to~$c'$ in step~$i$ and subsequently switches to another color~$c''$
  in step~$j>i$ and both are of type~(iii). We have~$x_i^{c'}\le
  1/2-\epsilon$ and~$x_j^{c''}\le 1/2-\epsilon$. The first implies
  that~$x_i^{c''}\ge 1/2+\epsilon$; see (*).
  Hence,~$x_j^{c''}-x_i^{c''}\le -2\epsilon$.

  Notice that for every~$j>i$ and~$c$ holds that
  \begin{align}
    x_{j}^c- x_{i}^c\ & =\ \mathop{\sum\limits_{s,t:}}_{s\le j\le t}
    { y^c_{s,t}}-\mathop{\sum\limits_{s,t:}}_{s\le i\le t} {
      y^c_{s,t}}= \mathop{\sum\limits_{s,t:}}_{i+1\le s\le j\le t}
    { y^c_{s,t}}-\mathop{\sum\limits_{s,t:}}_{s\le i\le t\le j-1}
    { y^c_{s,t}} \ge 0-\mathop{\sum\limits_{s,t:}}_{s\le i\le t\le
      j-1}
    { y^c_{s,t}} \nonumber \\[1ex]
    & =\ -(Z_{j-1}^c-Z_{i-1}^c). \nonumber
  \end{align}
  Therefore,~$2\epsilon\le x_{i}^{c''}-x_{j}^{c''}\le
  Z_{j-1}^{c''}-Z_{i-1}^{c''}$.  Thus, for color~$c''$ there is an
  increase of the~$Z$-variable of~$2\epsilon$ between two switches of
  the third type. We conclude that the total cost due to switches of
  the third type is also~$O(1/\epsilon)$ times the LP cost.

  \bigskip

  Now we bound the capacity needed. Consider any~$c$ and step~$j$ and
  let~$i< j$ be the last time before~$j$ that~$c$ was removed from the
  buffer in the rounded solution. We may assume that~$c$ was not
  removed at step~$j$ since otherwise there are no items of color~$c$
  at the end of step~$j$. Denote the term for color~$c$ in
  constraint~\eqref{eq:LP3} by~$a_j^c$.
  \[a_j^c=\mathop{\sum\limits_{s,t:}}_{s\le j\le t} A^c_{s,j}{
    y^c_{s,t}}.\] Intuitively,~$a_j^c$ is the amount of color~$c$ in
  the buffer at step~$j$ in the LP-solution. On the other hand, the
  number of items of color~$c$ in the buffer at step~$j$ in the
  rounded solution is~$A^c_{i+1,j}$.  To relate the rounded solution
  to the LP-solution we are interested in the variables that
  correspond to~$(s,t)$-intervals with~$s\le i+1\le j\le t$. For
  these~$(s,t)$-intervals we have~$A^c_{s,t}\ge A^c_{i+1,j}$.

  Since we have not picked color~$c$ in steps~$i+1,\dots,j$, we have
  $Z_{j}^c-Z_{i}^c< \epsilon$.  Note further that
  $\mathop{\sum}_{s,t:s\le i+1\le j\le t} { y^c_{s,t}} \geq
  x_{i+1}^c-Z^c_{j-1}+Z^c_{i}.$ Since~$c$ is not removed at step~$i+1$
  we have~$x_{i+1}^c>1/2-\epsilon$. 
  Using additionally~$Z^c_{j-1}\le Z^c_{j}$ we conclude that
  \[
  \mathop{\sum\limits_{s,t:}}_{s\le i+1\le j\le t} {
    y^c_{s,t}}>\frac12-\epsilon-Z^c_{j-1}+Z^c_{i}\ge
  \frac12-\epsilon-Z^c_{j}+Z^c_{i} > \frac{1}{2}-2\epsilon.
  \]
  Finally, we can relate the amount of~$c$ in the LP-buffer with the
  number of~$c$ in the buffer of the rounded solution.
  \[
  a_j^c=\mathop{\sum\limits_{s,t:}}_{s\le j\le t} A^c_{s,j}{
    y^c_{s,t}}\ge \mathop{\sum\limits_{s,t:}}_{s\le i+1\le j\le t}
  A^c_{s,j}{ y^c_{s,t}}\ge
  A^c_{i+1,j}\mathop{\sum\limits_{s,t:}}_{s\le i+1\le j\le t} {
    y^c_{s,t}}\ge A^c_{i+1,j}\left(\frac{1}{2}-2\epsilon\right).
  \]
  Hence, the total number of items in the buffer after step~$j$ is
  $A^c_{i+1,j}\le a_j^c/(1/2-2\epsilon)\le (k-1)/(1/2-2\epsilon)$.
  Moreover, when $j=n+k-1$, we have $a_j^C=0$ by
  constraint~\eqref{eq:LP3a}. This implies that the
  buffer is empty at the end.  \qed
\end{proof}

\section{Dynamic programming}
\label{sec:dp}

Straightforward dynamic programming algorithms solve the sorting
buffer problem optimally in running time~$O(n^{k+1})$ or~$O(n^{C+1})$;
see also~\cite{KP04,KhandekarP10}. In this section we consider the
special problem setting with a buffer of size~$k=2$, and give an
algorithm with linear running time for this special case. This is
optimal since the size of the input is $\OO(n\log C)$.
\begin{theorem}\label{thm:DP}
  There is an optimal algorithm solving the sorting buffer problem
  with buffer size~$k=2$ in time~$\OO(n\log C)$.
\end{theorem}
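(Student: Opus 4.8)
The plan is to first identify the right notion of "state" for a buffer of size $k=2$. At any point during the processing, the buffer holds at most one item that has not yet been served (by the observation in the paper, after each step the buffer holds at most $k-1=1$ items), so a natural configuration is the pair $(i, c)$ where $i$ is the index of the next input item to be read and $c$ is the color of the single item currently held in the buffer (or a special symbol $\bot$ if the buffer is empty). The key structural fact I would establish is that we never need to carry in the buffer an item whose color we will not use again very soon: for $k=2$, whenever item $i$ enters the buffer and meets a held item of color $c$, an optimal schedule either serves color $c$ immediately (emptying the buffer) or serves color $c(i)$ immediately, so the held item is always "the previous input item" in a suitable sense. This collapses the relevant states to $O(n)$ many, indexed essentially by the input position, which is what makes linear time possible.

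Next I would set up the DP recurrence on these $O(n)$ states. Let $f(i,c)$ be the minimum number of color changes needed to have processed input items $1,\dots,i-1$, be about to read item $i$, and currently hold one item of color $c$ in the buffer (with $f(i,\bot)$ defined analogously). The transition on reading item $i$ with color $d=c(i)$: if $d=c$ the two items merge and we are free to serve color $d$ now or postpone; if $d\ne c$ we must either (a) serve the buffer color $c$ first — paying a switch if the last served color was not $c$ — thereby leaving only the new item of color $d$ in the buffer, or (b) serve $d$, leaving the old item of color $c$. The subtlety is that the cost of a "switch" depends on the previously served color, so the state must also track the color last sent to the server; but again for $k=2$ the last-served color is forced to be either $c$ or the color of the item one further back, so this adds only a constant factor to the state space, keeping it $O(n)$. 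I would write out the handful of cases and verify the recurrence computes $\opt$ exactly, with the answer read off from $f(n+1,\bot)$ after forcing the buffer to empty.

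The real work — and the main obstacle — is the data-structure part: showing each of the $O(n)$ DP entries can be evaluated in amortized $O(\log C)$ time. The naive evaluation of each transition looks at all colors, giving $O(nC)$ or worse. The fix is to observe that between two consecutive "decision points" the schedule is forced (runs of items whose colors match the buffer, or long monochromatic blocks such as the $\delta$-type blocks in the hardness construction are irrelevant here but similar structure occurs), so one can preprocess $\sigma$ into maximal blocks and, for each color $c$, maintain the positions of its occurrences in a balanced search tree keyed by position. To evaluate a transition one needs, e.g., "the next occurrence of color $c$ after position $i$" and "is the last-served color equal to $c$", both answerable in $O(\log C)$ time with an appropriate balanced BST / van~Emde~Boas-style structure over the at most $C$ active colors, updated $O(1)$ times per input item. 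Summing over the $n$ items gives the claimed $O(n\log C)$ bound; I would close by noting this matches the input size $\Theta(n\log C)$ and is therefore optimal. The delicate point to get right is the amortization: one must argue that the total number of BST operations triggered across all transitions is $O(n)$, not $O(n)$ per state, which follows from charging each operation to a distinct input item that is "consumed" by it.
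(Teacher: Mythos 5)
Your forward DP is built on states $(i,c)$, where $c$ is the color of the item currently held in the buffer, and your whole running-time claim rests on the assertion that these states collapse to $O(n)$ because ``the held item is always the previous input item in a suitable sense.'' That assertion is false. When the buffer is full with a held item of color $c$ and the entering item $i$ has color $c(i)\neq c$, serving $c(i)$ keeps the held item in the buffer, and it can stay there across arbitrarily many later steps: on the sequence $a\,b\,b\,\dots\,b\,a$ the optimal schedule holds the first $a$ while all the $b$'s stream through, so the held item is arbitrarily old. Moreover, many different colors can simultaneously be optimal held colors at the same position (on an all-distinct sequence every color seen so far can be the one served last at no extra cost), so the set of relevant $(i,c)$ pairs really is $\Theta(nC)$ in the worst case, and filling such a table costs $\Omega(nC)$, not $O(n\log C)$. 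The side claim that the last-served color is forced to be one of two colors is also not literally true (it can be an arbitrary old color), though that part is repairable, since only its equality with the held or incoming color matters for future cost.

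What is missing is exactly the structural engine that the paper supplies. Instead of computing a cost for every (position, held color) pair, the paper maintains only the optimal cost $\opt_i$ together with the set $S_i$ of colors that can be served \emph{last} in some optimal schedule of the first $i$ items, plus their sizes; it proves that at most one color in $S_i$ has size greater than one, that $|S_i|\le|S_{i-1}|+1$, and then a case analysis (the cases $c_i\in S_{i-1}$, $c_i\in S_{j-1}$ with $size(c_j)=1$, and so on) establishes both correctness of the transition and the fact that the total number of changes to the $S_i$'s over the whole run is $O(n)$, so that storing only the changes in an array of history plus a color array and a linked list yields the $O(n\log C)$ bound. Your data-structure paragraph (next-occurrence queries in a balanced tree, amortization by ``charging each operation to a distinct consumed item'') is asserted rather than proved and, without a genuine state-space collapse, has nothing to amortize against. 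As written the proposal does not establish the theorem; to repair it you would need either the paper's optimal-finishing-colors bookkeeping or an equally strong substitute lemma bounding the number of DP values that must actually be computed.
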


In our dynamic programming algorithm, we maintain the optimal
cost~$\opt_i$, a set~$S_i$ of colors, and the sizes of those colors.
A color $c$ is in $S_i$ if there exists an optimal way to serve
the first $i$ items in the sequence such that an item of color $c$
is served last. The size of a color is the (or, a possible) number of items of this
color that are served together if this color is served last.
In order to use only linear time, from one step to the next we only
store the \emph{changes} in $S_i$ and in the sizes of the colors.
This works because the number of these changes is amortized constant
per step.

We can initialize~$S_1=\{c_1\}$ and~$\opt_1=1$.  The cost~$\opt_i$
increases as soon as~$c_i\not=c_1$ for some~$i$; as long as
$c_1=c_2=\dots=c_i$, we have~$\opt_i=1$ and~$size(c_i)=i$.  

\begin{definition}
For any step~$i>1$, let~$j<i$ be the most recent step such that
$c_j\not=c_i$. If there is no such step, set $j=0$.
\end{definition}


\begin{observation}
  For each~$i>1$, we have~$|S_i|\leq |S_{i-1}|+1$.
\end{observation}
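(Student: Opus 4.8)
The plan is to prove the stronger statement $S_i\setminus\{c_i\}\subseteq S_{i-1}$; since this says that the only color $S_i$ can contain that is not already in $S_{i-1}$ is $c_i$, the bound $|S_i|\le|S_{i-1}|+1$ is immediate. Throughout I would use two easy facts. First, $\opt_{i-1}\le\opt_i$: delete item $i$ from the output of an optimal schedule for $\sigma_1\cdots\sigma_i$; this cannot increase the number of color changes, and since $k=2$ it cannot overfill the buffer. Second, $\opt_i\le\opt_{i-1}+1$: run an optimal schedule for $\sigma_1\cdots\sigma_{i-1}$, which ends with an empty buffer, then insert item $i$ and serve $c_i$, at the cost of at most one extra color change. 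We may also assume every serve action in an optimal schedule removes at least one item, since an action removing nothing only wastes a color change.

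Next, fix $c\in S_i$ with $c\ne c_i$ and an optimal schedule $\mathcal S$ for $\sigma_1\cdots\sigma_i$ whose last served item has color $c$. When item $i$ enters the buffer it cannot enter a full one, so the buffer then holds at most one item; it is not empty, for otherwise $\mathcal S$ would be forced to finish by serving $c_i$, contradicting $c\ne c_i$. Hence the buffer contains exactly one item $j<i$, and examining the two ways to empty $\{j,i\}$ shows that $\mathcal S$ finishes by serving $c_i$ (just item $i$) and then the color of $j$ (just item $j$); in particular $c_j=c$ and $c_j\ne c_i$. Let $\mathcal S'$ be obtained from $\mathcal S$ by deleting item $i$ from the output; it is a feasible schedule for $\sigma_1\cdots\sigma_{i-1}$ still ending with color $c$. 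Let $f$ be the color served immediately before item $i$ in $\mathcal S$ (meaning the empty initial state if item $i$ is served first, which is then neither $c_i$ nor $c$). If $f=c$, item $i$ is a singleton $c_i$-run between two $c$-runs, so deleting it lowers the cost by $2$ and $\mathrm{cost}(\mathcal S')=\opt_i-2<\opt_{i-1}$, a contradiction; hence $f\ne c$. If $f\ne c_i$, item $i$ is a singleton $c_i$-run between an $f$-run and a $c$-run with $f\ne c$, so deleting it lowers the cost by exactly $1$; then $\opt_{i-1}\le\mathrm{cost}(\mathcal S')=\opt_i-1\le\opt_{i-1}$, which forces $\mathrm{cost}(\mathcal S')=\opt_{i-1}$, hence $c\in S_{i-1}$. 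If $f=c_i$, item $i$ merely lengthens an already-open $c_i$-run, so $\mathrm{cost}(\mathcal S')=\opt_i$, and we are done whenever $\opt_i=\opt_{i-1}$.

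The one remaining case, $f=c_i$ together with $\opt_i=\opt_{i-1}+1$, is the main obstacle, and here the buffer-size-$2$ structure has to be used in detail: it severely constrains the shape of the final $c_i$-run of $\mathcal S$ and forces the buffer to be exactly $\{j\}$, with all of $\sigma_1\cdots\sigma_{i-1}$ already inserted, just before that run begins. By tracing when item $j$ entered the buffer and when the other $c_i$-items of that run were inserted, I expect to show either that such an $\mathcal S$ cannot be optimal for $\sigma_1\cdots\sigma_i$, or that it can be rewritten locally into an optimal schedule for $\sigma_1\cdots\sigma_{i-1}$ still ending with color $c$; either conclusion yields $c\in S_{i-1}$ and completes the proof.
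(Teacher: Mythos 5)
Your first two paragraphs are sound and essentially formalize the paper's (very terse) justification: any $c\in S_i$ with $c\ne c_i$ forces item $i$ to be served penultimately with a single held item $j$ of color $c$ served last, and deleting item $i$ from such a schedule yields an optimal schedule for the first $i-1$ items ending in $c$ whenever the deletion saves a run or whenever $\opt_i=\opt_{i-1}$. However, the third paragraph is not a proof: for the case $f=c_i$ together with $\opt_i=\opt_{i-1}+1$ you only state that you ``expect to show'' a contradiction or a local rewriting, without exhibiting either. That case is exactly where the simple deletion argument fails (the deleted schedule has cost $\opt_{i-1}+1$ and cannot be certified optimal), so the statement is not established by what you wrote; note also that if this case could produce colors outside $S_{i-1}\cup\{c_i\}$ for several different choices of $c$, even the cardinality bound $|S_i|\le|S_{i-1}|+1$ would be in doubt, so the case cannot simply be waved away.

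To indicate why real work remains (and where it should go): with buffer size $2$, once item $j$ has entered the buffer and is held until the end, the output order of all remaining items is forced to be their arrival order. Hence the item served immediately before item $i$ is item $i-1$ when $j<i-1$, and is the unique leftover item $u$ when $j=i-1$. In the first situation $f=c_i$ gives $c_{i-1}=c_i$, and then item $i$ can always be appended to the run serving item $i-1$ in an optimal schedule for the first $i-1$ items, so $\opt_i=\opt_{i-1}$, contradicting the case hypothesis; thus the only surviving configuration is $j=i-1$, i.e.\ $c=c_{i-1}$ and $c_u=c_i$, and it is this configuration for which one must still argue that $c_{i-1}\in S_{i-1}$ (or that the configuration cannot be optimal). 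Your sketch gestures at this analysis but does not carry it out, so there is a genuine gap. For comparison, the paper itself disposes of the observation with the one-line remark that any color other than $c_i$ ``would have been optimal before''; your attempt is more careful than that remark, but it stops precisely at the subcase that the remark glosses over, and without closing it the claimed inclusion $S_i\setminus\{c_i\}\subseteq S_{i-1}$, and hence the observation, is not proved.
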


The only color that could possibly enter the set of optimal finishing
colors is the color of the most recent item; any other color would have
been optimal before.

\begin{lemma}
  At any step~$i$, there can be at most one color~$c$ such
  that~$size(c)>1$; this is color~$c_i$.
\end{lemma}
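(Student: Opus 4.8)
The plan is to prove the sharper claim that any color $c$ with $size(c)>1$ must be $c_i$; ``at most one such color'' follows at once since there is only one color $c_i$. So I would fix a color $c\in S_i$ with $c\ne c_i$ together with an optimal schedule of the first $i$ items that serves $c$ last (such a schedule exists by definition of $S_i$), and argue that its final block --- the maximal run of items served at the very end, which is monochromatic of color $c$ --- consists of a single item; this gives $size(c)=1$.

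This is a short proof by contradiction that uses $k=2$ in an essential way. Suppose the final block contained two distinct items $p<q$ of color $c$. Since $c\ne c_i$, neither is item $i$, so $q\le i-1$. Items leave the buffer only by being served, and both $p$ and $q$ are served only at the very end; hence each of them occupies the buffer continuously from its arrival until the end. In particular $p$ is still in the buffer when item $q$ enters at step $q$, and since the buffer holds at most one item after step $q-1$, that item is $p$, so right after step $q$ the buffer is exactly $\{p,q\}$, both of color $c$.

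The crucial step is then to observe that at step $q$ the schedule is forced to serve color $c$: the buffer is full, and item $q+1$ exists (as $q\le i-1<i\le n$), so it can enter only once something is removed, but $c$ is the only color present. Serving $c$ removes both $p$ and $q$ at step $q\le i-1$, contradicting that they are served last. I would expect this last step --- pinning down why a valid schedule cannot leave the buffer full and therefore must serve $c$ at step $q$ --- to be the only point requiring care, together with the boundary case $q=i$, which is precisely what the hypothesis $c\ne c_i$ excludes. Beyond that there is no real obstacle: the lemma just records that with a buffer of size two, two equally-colored items can never coexist in the buffer without being served immediately.
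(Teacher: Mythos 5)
Your proof is correct, and it reaches the contradiction by a different mechanism than the paper. You track the two hypothetical color-$c$ items $p<q$ of the final block and show they would jam the size-two buffer: once $q$ enters, the buffer is exactly $\{p,q\}$, both of color $c$, so before item $q+1\le i$ (which must still be served) can enter, the schedule is forced to serve color $c$, removing $p$ and $q$ well before the end --- contradicting that they are served last. The paper instead looks at item $i$ itself: if the last two served items both had color $c\ne c_i$, then item $i$ would have to be served in position $i-2$ or earlier, i.e., before it can even enter the buffer, since with $k=2$ item $i$ enters only after $i-2$ items have been served. Both arguments are elementary consequences of $k=2$; the paper's is shorter (a one-line position count for item $i$), while yours isolates a slightly more general structural fact --- two equally colored items can never coexist in a size-two buffer without being flushed immediately, unless they are the very last items --- which you handle cleanly, including the forcing step and the boundary case $q=i$ excluded by $c\ne c_i$. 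No gaps.
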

\begin{proof}
  Suppose there is any other color~$c$ in~$S_i$ with~$size(c)>1$.
  Then the last two items in some optimal serving order have
  color~$c\not=c_i$.  But then item~$i$ is served in step~$i-2$ or
  before, i.e.~before it entered the buffer, a contradiction.
  \qed
\end{proof}

We are now ready to present our dynamic program.
As stated, it begins processing as soon as a color different from
$c_1$ appears in the input. In each step, it determines the current
set $S_i$ and the sizes of all the colors, based on this information
of the previous step.

\begin{enumerate}
\item Before processing the $i$th item, store the answers to the
following questions using $S_{i-1}$ and the current sizes of colors
(i.e., as they are after processing step $i-1$):
\begin{enumerate}
 \item $c_i\in S_{i-1}$?
 \item $c_{i-1}\in S_{i-1}$?
 \item If so, do we have $size(c_{i-1})=1$?
 \item $c_{i+1}\in S_{i-1}$? (We may need this information in step $i+1$;
hence, we need to remember this bit for one step)
\end{enumerate}
Finally, if $c_i\not=c_{i-1}$, set $j=i-1$ and update the bits
indicating whether $c_j=c_{i-1}\in S_{i-1}$ and
$c_i\in S_{j-1}=S_{i-2}$ (using the answer to (d) that
was stored in the previous step; if this is the first step that the
dynamic program is executed, we have $c_i\notin S_{i-2}$).
If $c_i=c_{i-1}$, keep those bits unchanged.
\item If~$c_i\in S_{i-1}$, then~$\opt_i=\opt_{i-1}$ and~$size(c_i)$
  increases by 1.  If $c_i=c_{i-1}$, $S_i$ remains unchanged.
  Else,~$S_i$ consists of at most two
  colors:~$c_i$ and possibly $c_{i-1}$.  This color~$c_{i-1}$ is only
  in~$S_i$ if~$c_{i-1}\in S_{i-1}$ and~$size(c_{i-1})=1$.
\item If~$c_i\notin S_{i-1}$, there are two cases.
  \begin{enumerate}
  \item If~$c_i\in S_{j-1}$,~$c_j\in S_{i-1}$ and~$size(c_j)=1$,
    then $S_i=\{c_j\}$, $size(c_j)=1$, and~$\opt_i=\opt_{i-1}$.
  \item Else, add $c_i$ to $S_{i-1}$ to get $S_i$,
let $size(c_i)=1$, and~$\opt_i=\opt_{i-1}+1$.
 If~$c_i\in S_{i-2}$, $c_{i-1}\in S_{i-1}$
    and $size(c_{i-1})\geq2$, then we still have $c_{i-1}\in S_i$, but now
    with~$size(c_{i-1})=1$.
    There are no other changes between~$S_{i-1}$ and~$S_i$.
  \end{enumerate}
\end{enumerate}

\begin{lemma}\label{lem:dpcost}
  This dynamic program calculates the optimal cost.
\end{lemma}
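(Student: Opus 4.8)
The plan is to prove Lemma~\ref{lem:dpcost} by induction on the step index $i$, showing that the invariants maintained by the dynamic program hold after each step: namely, that $\opt_i$ is the true optimal cost for serving the first $i$ items, that $S_i$ is exactly the set of colors that can appear last in some optimal serving order of the first $i$ items, and that the stored $size(c)$ for each $c\in S_i$ is a valid number of items of color $c$ served together at the end of such an optimal schedule. The base case is the prefix where $c_1=\cdots=c_i$, handled explicitly before the main loop; here $\opt_i=1$, $S_i=\{c_1\}$, $size(c_1)=i$ is immediate. The first genuine step occurs when $c_i\neq c_1$.

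**For the inductive step,** I would fix $i>1$ and assume the invariant holds for step $i-1$. The key structural fact driving everything is the buffer-size-$2$ constraint: since the buffer holds at most two items, item $i$ enters the buffer at step $i$ and must be served at step $i$, $i+1$, or $i+2$; equivalently, in any feasible schedule the item served at step $i$ is one of $c_i$, $c_{i-1}$, or $c_{i+1}$ (the latter only via the ``look one step ahead'' that step (d) records). Combined with the earlier Lemma that at most one color can have $size>1$ (and it must be $c_i$), this severely restricts how $S_i$ relates to $S_{i-1}$ and to $S_{j-1}$ where $j$ is the most recent step with $c_j\neq c_i$. I would then verify each of the three cases in the algorithm:
\begin{itemize}
\item[(2)] $c_i\in S_{i-1}$: an optimal schedule of the first $i-1$ items ending in $c_i$ extends for free by appending item $i$ to that final block, so $\opt_i=\opt_{i-1}$ and $size(c_i)$ grows by one. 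I must also check no \emph{other} color can newly enter $S_i$ (only $c_{i-1}$ can, and only if it was already freely finishable with $size=1$, since item $i$ must then be served one step before the end) and that $c_{i-1}$ drops out of $S_i$ when it had $size\geq2$, because then item $i$ could not have been served before that block.
\item[(3a)] $c_i\notin S_{i-1}$ but $c_i\in S_{j-1}$, $c_j\in S_{i-1}$, $size(c_j)=1$: here the ``swap'' argument shows that an optimal schedule of the first $j-1$ items ending in $c_i$, followed by serving $c_j$'s lone item last over $[j,i]$, with item $i$ slotted into the $c_i$ block, achieves cost $\opt_{i-1}$ with $c_j$ finishing; and one argues this is the \emph{only} surviving option, so $S_i=\{c_j\}$.
\item[(3b)] otherwise $\opt_i=\opt_{i-1}+1$ with $c_i$ appended as a fresh singleton block; $S_i$ is $S_{i-1}\cup\{c_i\}$ with the one correction that $c_{i-1}$, if present with $size\geq2$, now has $size=1$ in the schedules that pay the extra switch.
\end{itemize}

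**The main obstacle** will be the completeness direction in cases (2) and (3): showing that the algorithm does not \emph{miss} an optimal finishing color. Concretely, I must argue that if $c$ can finish some optimal schedule of the first $i$ items, then either $c=c_i$, or $c\in S_{i-1}$ (so $c$ was already finishable for the first $i-1$ items without paying more), or $c=c_j$ arises exactly via the structure in (3a). The argument rests on the Observation ($|S_i|\le|S_{i-1}|+1$) together with an exchange argument: given an optimal schedule for the first $i$ items ending in color $c\neq c_i$, item $i$ is served at step $i-2$, $i-1$, or (if $c=c_{i+1}$) is deferred — but a deferred $c_i$ cannot be the finishing color, so item $i$ is served strictly before the final block, hence deleting it (and re-inserting it into whatever earlier $c_i$-block it neighbors) yields an optimal schedule for the first $i-1$ items ending in $c$, giving $c\in S_{i-1}$; the only subtlety is when there is no adjacent $c_i$-block to absorb item $i$, which is precisely the $c_j$ case of (3a). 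I would also need to double-check the bookkeeping in step (1), especially that the bit $c_i\in S_{j-1}$ is correctly propagated through a run of equal colors (when $c_i=c_{i-1}$, $j$ is unchanged and $S_{j-1}$ is unchanged, so the bits are correctly ``kept unchanged''), and that $c_i\in S_{i-2}$ is available from the $(d)$-bit stored at step $i-1$. These are routine once the exchange arguments are in place, so I would state them briefly and move on.
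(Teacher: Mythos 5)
Your overall strategy is the same as the paper's: induction on $i$ with the invariant that $\opt_i$, the set $S_i$ of optimal finishing colors, and the stored sizes are exact, verified case by case against the update rules via exchange arguments, using the size-$2$ buffer constraint and the lemma that only $c_i$ can have size greater than one. However, two concrete points in your sketch are wrong or missing. First, your ``key structural fact'' is false as stated: with a buffer of size two, the item served at a given step need not have color $c_{i-1}$, $c_i$ or $c_{i+1}$, because one buffer slot may hold an arbitrarily old item, so the color served there can be anything. The fact you actually need (and the paper uses) is the dual one: item $i$ cannot be served before position $i-1$, hence in a schedule of the first $i$ items it is last or next-to-last, while the \emph{other} of the two final positions may be occupied by any held older item. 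Your completeness argument inherits this confusion (``item $i$ is served at step $i-2$, $i-1$, or \dots is deferred''--- in the prefix problem there is no deferral, and the admissible positions are $i-1$ and $i$).

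Second, and more substantively, showing that every optimal finishing color lies in $S_{i-1}\cup\{c_i\}\cup\{c_j\}$ is not enough: the algorithm \emph{discards} most of $S_{i-1}$ when $c_i\in S_{i-1}$ and $c_i\neq c_{i-1}$ (keeping at most $c_i$ and $c_{i-1}$), and in Case 3(a) it keeps only $c_j$. You must therefore also prove that the discarded colors are no longer optimal finishing colors, and that in 3(a) the set collapses to exactly $\{c_j\}$ with size $1$; you leave the latter as ``one argues this is the only surviving option,'' but that is precisely where the paper's proof does its real work: it shows that in Case 3(a) \emph{every} optimal order must merge item $i$ with an earlier item of color $c_i$, forcing item $i$ to be penultimate, the items served at positions $j-1,\dots,i-1$ all to have color $c_i$, and item $j$ to be served last. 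Relatedly, you locate the ``subtlety'' of your deletion argument backwards: deleting item $i$ fails to certify $c\in S_{i-1}$ exactly when item $i$ \emph{is} absorbed into an adjacent $c_i$-block (so the cost does not drop) while $\opt_i=\opt_{i-1}+1$; the case with no adjacent $c_i$-block is the easy one, and it is not ``the $c_j$ case of (3a),'' since 3(a) is by definition the case where item $i$ joins an earlier $c_i$-item. Until these points are repaired, the induction step does not go through.
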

\begin{proof}
  We use induction. For the base case ($i=1$), the optimal cost is 1,
  and~$S_1=\{c_1\}$.  Consider a later step~$i$. For item~$i$, there
  are only two options: the item must be served last or next-to-last
  (because it does not enter the buffer until~$i-2$ items have been
  served).

  The easiest case occurs when $c_i\in S_{i-1}$: if it was optimal to serve the
  first~$i-1$ items and finish in color~$c_i$, this is now still
  optimal, and we can just serve the new item last. (Step 2.)
  However, we still need to determine the other colors in $S_i$.
Now, if $c_i=c_{i-1}$, consider any color $c\in S_{i-1}, c\not=c_i$.
By induction, an optimal way to serve the first $i-1$ items is to
end with colors $c_{i-1},c$ in this order (item $i-1$ cannot be served
before step $i-2$). It follows that we can now still use this order
without increasing the cost, since $c_i=c_{i-1}.$ Hence, we find that
$S_i=S_{i-1}$.

If $c_i\not=c_{i-1}$, $c_{i-1}\in S_{i-1}$, and $size(c_{i-1})=1$,
then item~$i-1$ can be served last as well. This holds because
$c_i\in S_{i-2}$ (a color can only enter $S_i$ when it is being
requested), so an optimal way to serve the first $i-1$ items is
to end with colors $c_i,c_{i-1}$ in this order, and we can still
use this order now without extra cost.

  Consider the case~$c_i\notin S_{i-1}$. There are two possible
  reasons for not serving item~$i$ last:
  \begin{itemize}
  \item Item~$i$ can be combined with an earlier item of the same
    color (but only if item~$i$ is \emph{not} served last). To find
    out whether this is the case, we consider the set~$S_{j-1}$ of
    optimal finishing colors just before item~$j$ arrived. If~$c_i\in
    S_{j-1}$, then color~$c_i$ was dropped from the set of finishing
    colors when item~$j$ arrived. In this case we can combine
    items~$j+1,\dots,i$ (that all have color~$c_i$) with a previous
    item of color~$c_i$ if and only if~$size(c_j)=1$ (because that
    allows us to keep item~$j$ in the buffer while serving color~$i$
    without increasing the total cost for color~$c_j$). (Step 3(a).)

    Else, it is optimal to serve item~$i$ last, but the optimal cost
    increases by 1 compared to the previous step.
  \item Some other item can be combined with a \emph{future} item of
    the same color.  (This is the reason why we keep track of optimal
    finishing colors.)  It can be seen that any color in~$S_{i-1}$ with
    size 1 can also be delayed for one more step without increasing
    the cost further (note that we are in the case where the optimal
    cost has increased compared to the previous step).  A color~$c$
    in~$S_{i-1}$ of size at least 2 could be served last by splitting
    it into two parts and keeping one item until the end. But in order
    for this to be optimal, we must have~$c_i\in S_{i-2}$, since we
    now pay 1 more for color~$c$ and hence must serve color~$c_i$
    with the same total cost as before, i.e., item~$i$ must be served
    together with a previous item of the same color, while also being
served in step $i-1$ (or $i$).
\end{itemize}

We complete the proof by showing that~$S_i=\{c_j\}$ in Case 3(a).
This is a case in which the optimal cost for serving the first~$i$
items is the same as it was for serving the first~$i-1$ items. In
particular, the cost to serve color~$c_i$ did not increase, although
$c_i\notin S_{i-1}$.  Hence, in \emph{any} optimal serving order, item
$i$ must be combined with at least one previous item of the same
color. In particular, item~$i$ must be served as the penultimate item
(since it cannot be served earlier, and if it is served last, we must
use a suboptimal way to serve the first $i-1$ items if we want to combine item $i$ with a previous one of the same color).  In fact, the items served in all
steps~$j-1,\dots,i-1$ must be of color~$c_i$.  This is clear if
$j=i-1$. Else, all items following~$j$ have color~$c_i$, and yet
$c_i\notin S_{i-1}$. This can only happen if the items~$j+1,\dots,i$
are served in steps~$j,\dots,i-1$, following another item of color
$c_i$ which is served in step~$j-1$.  This means that item~$j$, which
is not of color~$i$, must be served in step~$i$, thus fixing
$S_i=\{c_j\}$ (and~$size(c_j)=1$).
\qed
\end{proof}

Hence, we maintain for each step the optimal cost so far, whether
color~$c_i$ enters the set of optimal finishing colors, which colors
leave, and which unique color has size more than 1 (if any).
It is a nontrivial task to maintain these things in only linear time,
and in particular to do this in such a way that an actual optimal
solution can be constructed afterwards (and not just the optimal cost).
We are going to use three objects:
\begin{itemize}
\item An array~$S$ of size~$C\leq n$, where~$C$ is the number of
  different colors.  In this array,~$S[c]$ indicates the (current) size of color~$c$ in~$S_i$.  (We assume the colors are
  given by numbers from 1 to~$C$.) Also, with each item $S[c]$ we
associate a pointer to $c$ in the list $L$ below.
(If $S[c]=0$, it is a null pointer.)
\item A doubly-linked list~$L$ which at all times has size~$|S_i|$
  and contains links from each item $c\in S_i$ to $S[c]$ to indicate which colors are
  nonzero (we need this in order to efficiently remove items from~$S$
  whenever needed).
\item An array~$H$ of size at most~$7n$ in which we store the entire
  history of changes in~$S$ and $\opt$. For each $i$, the first number
indicates whether $\opt_i>\opt_{i-1}$.
Then, we have a sequence of pairs (color, change),
followed by a zero to mark the end of processing for this $i$.
\end{itemize}

Regarding the size of $H$, in each step $i$ at most one color can enter $S_i$ and many may leave. However, the latter ones must
have entered before. Since each item in the input may cause only its color to enter $S_i$ (and this happens at most once for
each item), and each item may cause only preceding colors that entered $S_i$ to leave $S_i$, the total number of these changes
is at most $2n$. Finally, each item $i$ may cause one size of one other color $c'\in S_i$ to drop to 1 (in Step 3(b)); we have
at most $n/2$ such events, since the size of $c'$ must first have increased to above 1.

In total we have at most $5n/2$ changes that can be stored in an array of
length $n+5n+n=7n$, where for each $i$
we first store the possible change in $\opt_i$,
then use two places for each change in $S$ indicating the color
and the amount of change (positive or negative), and
finally a separator bit.

Note that by doing it in this way, we need to store numbers up to $n$
(the possible decrease of a color size in one step), which takes $\log n$
place, for a total space requirement of $n\log n$. However, we could also
encode a decrease of $d$ for color $c$
by using $d$ successive entries $c$.
Naturally we do need to specify colors, which takes $\log C$ bits,
so the overall space and time requirement can be limited to
$O(n\log C)$, i.e., linear
in the size of the input, which is a list of $n$ colors.

The questions in 1(a--d) can be answered in $\OO(\log C)$ time
by checking the array $S$. In fact,
all operations in the dynamic program take $\OO(\log C)$ or constant time
apart from clearing the set $S_i$ in Step 2 and 3(a) when needed,
the cost of which however can be amortized as argued above.

To find an optimal way to serve the sequence, we can finish with any
color in the array~$S$ as it is when step~$n$ has been processed.  We
then search the input for this color, starting from the end.  As soon
as we find it, say at position~$i$, we know that it is optimal to keep
item~$i$ in the buffer in the end, and therefore to serve items~$i+1,
\dots,n$ at places~$i,\dots,n-1$. We can then reconstruct~$S_{i-1}$
from~$S_n$ using the changes that we stored, take any color from
$S_{i-1}$, and repeat.  This also takes only linear time, and thus,
Theorem~\ref{thm:DP} follows.

In Appendix II, we will also give an example to illustrate the
algorithm and the storage and access of information in the described
data structures.

\section{A lower bound for LFD}
\label{sec:lfd}

The well-known paging problem has several offline algorithms that
solve it to optimality. One of those is the {\em Longest Forward
  Distance} (LFD) algorithm~\cite{belady66}. With the mentioned
relation to the sorting buffer problem, it is reasonable to consider a
natural adaption of this algorithm for sorting buffer. In the
following we give a negative result that rules out LFD as a candidate
for a constant approximation algorithm.


\begin{center}
  \begin{minipage}{\textwidth}
    \rule{\textwidth}{0.5pt}\\
    {\sc \bf Longest Forward Distance (LFD).}\\  If no item can be
    served without a color change, then choose the color of
    item~$i$ that has its next occurrence~$j>i$ farthest in the
    future of the sequence.  If no more items~$j$ with the same
    color as~$i$ exist, the distance is infinity.
    \rule[1ex]{\textwidth}{0.5pt}
  \end{minipage}
\end{center}

\begin{theorem}
  LFD has an approximation ratio of at least~$\Omega(k^{1/3})$.
\end{theorem}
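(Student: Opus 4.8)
The plan is to construct a family of instances, parametrized by the buffer size $k$, on which LFD pays a factor $\Omega(k^{1/3})$ more than an optimal offline schedule. The guiding intuition is that LFD is a myopic "keep what will be useful longest" rule that can be tricked into wasting buffer space on items whose far-future occurrences never actually materialize in a favorable way, while a smarter schedule clears those items early and uses the freed space to batch many cheap colors together. First I would set a parameter $d\approx k^{1/3}$ and design the sequence in $d$ phases. Each phase presents a large block of "junk" items of many distinct colors (each appearing only a bounded number of times within that phase) interleaved with a small number of "anchor" colors that have a single occurrence placed artificially far in the future. LFD, seeing the anchors as having maximal forward distance, will refuse to serve them, so it keeps roughly one buffer slot per anchor permanently occupied; with the reduced effective buffer it can only batch $O(1)$ occurrences of each junk color at a time, forcing roughly one color change per junk item, i.e.\ $\Omega(d\cdot(\text{junk per phase}))$ changes overall.

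Next I would describe the optimal schedule on the same instance: $\opt$ serves each anchor color immediately upon its (first and only) arrival in its phase - this is cheap because there are few anchors - thereby keeping essentially the full buffer of size $k$ available for junk. With $\Theta(k)$ free slots, $\opt$ can accumulate $\Theta(k)$ copies of a junk color before serving it, so the number of color changes for junk in each phase drops by a factor $\Theta(k/\text{const})$ relative to LFD. Balancing the two quantities - the overhead $d$ LFD incurs from anchors versus the per-phase junk savings factor - and choosing the multiplicities so that the arithmetic works out yields a ratio of $\Omega(k^{1/3})$; the exponent $1/3$ comes from having to simultaneously pay for (i) the number of phases, (ii) the number of anchors that must be "wasted," and (iii) the gap between LFD's and $\opt$'s batching size, each of which scales like a power of $k$ and whose product is constrained to be $O(k)$ by the buffer capacity.

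Carrying this out requires two technical steps. The first is to pin down exactly what LFD does: I would argue by induction over phases that, under LFD's tie-breaking and forward-distance rule, at the start of each phase the buffer already contains the unserved anchors from all previous phases (they are never the LFD-chosen color because something always has a larger forward distance until the very end), so the effective buffer shrinks phase by phase; this needs a careful invariant on the buffer contents and a verification that the "farthest future occurrence" is always an anchor when LFD is forced to switch. The second is a matching upper bound on $\opt$, which is the easier direction: exhibiting an explicit schedule and counting its color changes. The main obstacle I anticipate is the first step - controlling LFD's behavior precisely, since small perturbations in how the junk colors are laid out can change which item has the longest forward distance and hence derail the intended invariant. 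I would handle this by making the anchor occurrences drastically farther in the future than any junk occurrence (e.g.\ concentrating all anchor "future" copies in a final padding block), so that the forward-distance comparison is never close and LFD's hand is forced deterministically.
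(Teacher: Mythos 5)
Your construction is built on an inverted reading of the LFD rule, and this breaks the central mechanism of your plan. LFD, as the natural adaptation of Belady's paging rule (and as defined in the paper), \emph{serves} the color whose next occurrence lies farthest in the future --- serving is the analogue of eviction, so colors that will not recur for a long time (or never recur) are the ones LFD removes first. Your ``anchor'' colors, whose only future occurrence is placed drastically far away precisely so that ``the forward-distance comparison is never close,'' are therefore exactly the colors LFD would choose to serve immediately, not the ones it retains. The invariant you plan to prove --- that at the start of each phase the buffer still contains all unserved anchors because ``something always has a larger forward distance'' --- is thus false by design: nothing has a larger forward distance than your anchors, so LFD clears them at once, its effective buffer never shrinks, and on your instance LFD behaves essentially like the schedule you intended for $\opt$. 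No $\omega(1)$ gap, let alone $\Omega(k^{1/3})$, follows.

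The working construction uses the opposite trap. In the paper's instance, a single color~$0$ fills almost the entire buffer with $M\ge n^3$ items, and the sequence is arranged so that color~$0$'s \emph{next} occurrence is always near (at the head of the next line), while the other colors' next occurrences are slightly farther; hence LFD never serves color~$0$, the $M$ items clog the buffer throughout, and LFD pays about $n-i$ switches per line for a total of $\Theta(n^2)$. The optimum instead serves color~$0$ first, after which the entire remaining sequence (of length $\Theta(n^3)$, which is why $M$ must be $\Theta(n^3)$ and the exponent $1/3$ appears) fits in the freed buffer and costs only $n+1$ switches. If you flip your anchors so that their recurrences are \emph{near} rather than far --- i.e., make the clogging items the ones LFD considers cheapest to keep --- your phase-based outline collapses to essentially this construction; as written, however, the adversarial argument targets an algorithm that serves nearest-occurrence colors, which is not LFD.
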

\begin{proof}
  Consider the following input instance. Given is a buffer of
  size~$M+n$, where~$M \geq n^3$. The sequence of items is as follows;
  we describe each item by its color~(natural number), and we denote
  by~$a^b$ that the item with color~$a$ appears~$b$ times
  consecutively.
  \begin{align*}
    &[\ 0^{M}\ ]\\
    &[\  123 \ldots n  \ ] \  [\ 2\  3^2\  4^3\  \ldots\ n^{n-1}\ ]\\
    &[\ 0123 \ldots n-1 \ ] \  [\ 2\  3^2\  4^3\  \ldots\ (n-1)^{n-2}\ ]\\
    &[\ 0123 \ldots n-2 \ ] \  [\ 2\  3^2\  4^3\  \ldots\ (n-2)^{n-3}\ ]\\
    &\ \ldots\\
    &[\ 0123        \  ] \  [\ 2 \ 3^2\ ]\\
    &[\ 012          \  ] \  [\ 2\ ]\\
    &[\ 01 \ ]\
  \end{align*}
  The sequence consists of~$n+1$ lines; let us denote them
  as~$L_0,L_2,\ldots,L_n$. Initially, the buffer contains all items
  of line~$L_0$ and the first block~(in brackets) of~$L_1$. An optimal
  solution chooses color~$0$ first; it can serve all items of this
  color and by the end, all remaining items of the sequence are in the
  buffer.  Thus, there are no more than~$n+1$ color changes necessary.

  LFD chooses color~$1$ first, moving the next item of color~$2$
  into the buffer. Then it picks~$2$, moving two items of color~$3$
  into the buffer and repeats until it chooses~$n$ and moves the first
  block of~$L_2$ into the buffer. Then the process repeats. This way,
  LFD causes~$n-i$ color changes serving the first block of
  line~$L_i$. Thus, it has total
  cost~$n(n+1)/2$. 

  The ratio of LFD's cost and the optimal cost for this sequence
  are~$n/2$. Hence, LFD has an approximation ratio bounded
  by~$\Omega(k^{1/3})$ for a given buffer of size~$k$.\qed
\end{proof}

%
\section{Open problems}
\label{sec:conclusion}

Now that NP-hardness has been settled, the main open problem is to
design a polynomial time constant factor approximation. In the
introduction we listed several partial results on this. Given our
LP-rounding result, a natural next step is to design an algorithm that
gives an~$O(1/\epsilon)$-approximation against an offline solution using
only~$(1-\epsilon)k$ capacity, instead of $(1/2-2\epsilon)k$ .

We gave a dynamic program for~$k=2$ which has a significantly better
running time than the straightforward DP. It would be interesting to
give an exact algorithm for with a running time that is much less
than~$O(n^{k+1})$.

Our NP-completeness proof is not approximation preserving. It remains
a question whether the buffer sorting problem is APX-hard or not.

\bibliographystyle{abbrv}
\bibliography{buffer}

\begin{thebibliography}{10}

\bibitem{AsahiroKM2008}
Y.~Asahiro, K.~Kawahara, and E.~Miyano.
\newblock Np-hardness of the sorting buffer problem on the uniform metric.
\newblock In {\em Proceedings of the 2008 International Conference on
  Foundations of Computer Science}, pages 137--143. CSREA Press, 2008.

\bibitem{AsahiroKM2010}
Y.~Asahiro, K.~Kawahara, and E.~Miyano.
\newblock {NP}-hardness of the sorting buffer problem on the uniform metric,
  unpublished.
\newblock 2010.

\bibitem{rabaniA10}
N.~Avigdor-Elgrabli and Y.~Rabani.
\newblock An improved competitive algorithm for reordering buffer management.
\newblock In M.~Charikar, editor, {\em Proc.\ of the 21st {SODA}}, pages
  13--21, 2010.

\bibitem{BL07}
R.~Bar-Yehuda and J.~Laserson.
\newblock Exploiting locality: approximating sorting buffers.
\newblock {\em Journal on Discrete Algorithms}, 5(4):729--738, 2007.

\bibitem{belady66}
L.~Belady.
\newblock A study of replacement algorithms for virtual storage computers.
\newblock {\em IBM Systems Journal}, 5:78--101, 1966.

\bibitem{ERW07}
M.~Englert, H.~R{\"a}cke, and M.~Westermann.
\newblock Reordering buffers for general metric spaces.
\newblock In {\em Proc.\ of 39th STOC}, pages 556--564, 2007.

\bibitem{EW05}
M.~Englert and M.~Westermann.
\newblock Reordering buffer management for non-uniform cost models.
\newblock In {\em Proc.\ of 32th ICALP}, pages 627--638, 2005.

\bibitem{gareyJ79:book}
M.~R. Garey and D.~S. Johnson.
\newblock {\em Computers and Intractability: A Guide to the Theory of
  {NP}-Completeness}.
\newblock W.H. Freeman and Company, New York, 1979.

\bibitem{gutenschwagerSV04}
K.~Gutenschwager, S.~Spieckermann, and S.~Vo{\ss}.
\newblock A sequential ordering problem in automotive paint shops.
\newblock {\em International Journal of Production Research}, 42(9):1865--1878,
  2004.

\bibitem{KhandekarP10}
R.~Khandekar and V.~Pandit.
\newblock Online and offline algorithms for the sorting buffers problem on the
  line metric.
\newblock {\em Journal of Discrete Algorithms}, 8(1):24--35, 2010.

\bibitem{KP04}
J.~S. Kohrt and K.~Pruhs.
\newblock A constant approximation algorithm for sorting buffers.
\newblock In {\em Proc.\ of LATIN}, pages 193--202, 2004.

\bibitem{RSW02}
H.~R{\"a}cke, C.~Sohler, and M.~Westermann.
\newblock Online scheduling for sorting buffers.
\newblock In {\em Proc.\ of 10th ESA}, pages 820--832, 2002.

\end{thebibliography}



\end{document}